\documentclass[11pt]{article}
\usepackage{amsfonts}
\usepackage{alltt}
\usepackage{amssymb,verbatim,ifthen}
\usepackage{dsfont}
\usepackage[all]{xy}
\usepackage{color}
\usepackage{graphicx}
\usepackage{amsmath}
\usepackage{amsthm}
\usepackage{mathabx}
\usepackage{pdfsync}
\usepackage{caption}
\usepackage{natbib}

\setcounter{MaxMatrixCols}{10}
\linespread{1.15} \setlength{\marginparwidth}{0in}
\setlength{\marginparsep}{0in} \setlength{\oddsidemargin}{.2in}
\setlength{\evensidemargin}{0in} \setlength{\textwidth}{5.9in}
\setlength{\topmargin}{0in} \setlength{\textheight}{8.6in}
\def\split{true}
\ifthenelse{\equal{\split}{true}} {
\textwidth = 440pt \oddsidemargin = 2pt \evensidemargin = 2pt
}
{}
\newcommand{\ignore}[1]{}

\renewenvironment{proof}[1][Proof.]{\textbf{#1} }{\ \rule{0.5em}{0.5em}}

\newtheorem{theorem}{Theorem}
\newtheorem*{theorem*}{Theorem}

\newtheorem{lemma}{Lemma}

\newtheorem{definition}{Definition}

\newtheorem{remark}{Remark}

\def\1{{1\hskip-2.5pt{\rm l}}}

\def\s{\sigma}

\def\D{\Delta}

\def\G{\Gamma}

\def\Pr{\textbf{{\rm Pr}}}
\def\R{\mathbb{R}}

\def\L{\lambda}

\def\G1{G_{U\mbox-L}}

\def\si{\s_i^f}
\def\spa{\s_{1}^f}

\def\sf{\s^f}
\def\Gf{G_f}
\def\f{f^*}

\makeatother

\begin{document}

\title{On the Failures of Bonus Plans\thanks{The authors wish to thank  Rani Spiegler and Eddie Dekel for their valuable comments. Lehrer acknowledges the support of the Israel Science Foundation, Grant
\#762/045. Lagziel acknowledges the support of the Israel Science Foundation, Grant \#538/11.}}
\author{David Lagziel\thanks{School of Mathematical Sciences,
Tel Aviv University, Tel Aviv 69978, Israel. e-mail:
\textsf{davidlag@post.tau.ac.il}.} \ and  Ehud Lehrer\thanks{School
of Mathematical Sciences, Tel Aviv University, Tel Aviv 69978,
Israel and INSEAD, Bd. de Constance, 77305 Fontainebleau Cedex,
France. e-mail: \textsf{lehrer@post.tau.ac.il}.}  }
\maketitle

\thispagestyle{empty}

\lineskip=2pt\baselineskip=5pt\lineskiplimit=0pt

\noindent{\textsc{Abstract}}:  \begin{quote}
A decision maker (DM) has some funds invested through two investment firms. She wishes to allocate additional funds according to the firms' earnings. The DM, on the one hand, tries to maximize the total expected earnings, while the firms, on the other hand, try to maximize the overall expected funds they manage. In this paper we prove that, for every market, the DM has an optimal bonus policy such that the firms are motivated to act according to the interests of the DM. On the other hand, we also prove that the only policy that is optimal in every market, is independent of the actions and earnings of the firms.
\end{quote}

\bigskip
\noindent {\emph{Journal of Economic Literature} classification
numbers: C72, D47, G11, G14, G23, G24.

\bigskip

\noindent Keywords: Bonus plans; Investments decisions; Market design; Portfolio management.

\newpage
\lineskip=1.8pt\baselineskip=18pt\lineskiplimit=0pt \count0=1

\section{Introduction}
Bonus plans have an important rule in motivating agents in competitive markets, and especially in financial markets. In theory, these plans are intended to guarantee that financial managers are investing their best time and effort to produce the optimal possible earnings for their investors.

Though the motivation for these bonuses is clear, their effects are relatively vague. When the outcomes are stochastic and the plans are based on past performance, the agents\footnote{We sometimes refer to the firms as agents, and to the decision maker as investor.} can take unnecessary risks, from the investors' perspectives, to increase their own expected payoff. Therefore, the formulation of incentives plans has to ensure that agents are acting according to the best interests of the people they represent.

In the current paper we address this problem through a basic set-up. We assume that a decision maker (DM) wishes to invest additional funds using several firms. To do so, she chooses an \emph{allocation rule}, a \emph{bonus plan}, that is a function of the firms' outcomes. For example, a DM has \$100 thousand invested through two firms and she wishes to allocate an additional amount of \$50 thousand as a function of the earnings they produce by the end of the year. Another example - the DM is a manager in an investment firm that needs to distribute additional funds to several departments based on their annual return.

The motivation of the firms to maximize the total expected funds they manage could pose a problem, as it may not coincide with the DM's wishes to maximize the expected return. As it may occur in real-life, the firms can take advantage of the imperfect monitoring of their actions to increase their own expected payoffs, at the expense of the DM.

The results of this paper capture this problem in two aspects. First, we prove that for every market, i.e., for every set of possible actions of the firms, the DM can fix an \emph{optimal bonus plan} such that the firms, in equilibrium, act according to the preferences of the DM. This result applies when the market is relatively \emph{stationary}, in the sense that the set of actions does not change drastically. The proof we present in constructive and holds for a general number of firms and actions. A bonus plan that is linear in the sum of differences of the earnings of the firms proves to be optimal.

On the other hand, we also prove that a universal, non-trivial, optimal bonus plan does not exist. In other words, when the market changes frequently and the set of actions can change considerably, one cannot devise a bonus plan which remains optimal, unless it is independent of the outcomes of the firms.

As these results indicate, we focus on the existence of an optimal bonus plan, given that a bonus plan is needed, rather than discussing the necessity of the plan itself. We assume that agents who are indifferent between the outcomes they produce, due to the lack of incentives, will almost surely lead to a sub-optimal result. For example, firms without results-based payoffs, either positive or negative, will have no reason to invest in R\&D, expand to new markets, gain private information or even improve their trading capabilities.

This perspective defers our research and previous papers. In most previous works (see Subsection \ref{Subsection - Related literature} for a detailed description), the main question is whether a DM can design an incentives plan such that skilled agents are rewarded while unskilled agents are screened out. We, however, show that even when all the agents are skilful, every results-based bonus plan can lead to a sub-optimal outcome. Thus, individual investing capabilities or distinct private information, will have no positive effect on the agents' chosen portfolios.

Another aspect that differs our work and previous ones is the usage of a benchmark portfolio. As we compare the actual outcome of every agent to the potential outcome, given the agent's abilities and information, we do not require an exogenous benchmark. Nonetheless, we show that a bonus plan which is linear w.r.t. the sum of the differences in revenue of the agents is optimal, meaning that a comparison between agents is essential.

The implications of our results are not limited to the motivating example of additional funds divided between competing firms. Our model also applies to cases where an investor wishes to reallocate her previously-invested funds, according to future performance. Therefore, the allocation rule the investor uses, could imply actual bonuses, as well as penalties, and any combination of the two.

\subsection{Related literature} \label{Subsection - Related literature}
Numerous studies were conducted over the importance of performance-based payoffs and reputation in non-deterministic markets. Yet, to the best of our knowledge, the aspect we examine was never discussed. 

Many papers examined the well-known "herding" phenomenon, where agents tend to mimic other agents, although they can preform better with a certain probability, in order to conserve their reputation. One can track back this concept to the words of \cite{keynes2006general}: "Worldly wisdom teaches that it is better for reputation to fail conventionally than to succeed unconventionally." (Book 4, Chapter 12, page 158).

\cite{Scharfstein_Stein_1990} follows this argument and shows, under certain circumstances, that agents simply mimic the behaviour of others, ignoring substantive information they posses. Specifically, when the capabilities of an agent are assessed by the market according to his performance, as well as his conventionality, the agents may mimic each other, leading to a sub-optimal outcome.

\cite{dasgupta2006financial} continues this line of work. It studies the effect of career concerns on the performance of fund managers. Given a specific model, Dasgupta and Prat show that without career concerns, only fund managers with special abilities trade. However, they also prove that career concerns may lead to an equilibrium where unskilled fund managers have to trade in order to stand out, which may involve taking unnecessary risks. This work is generalized to a dynamic model in \cite{dasgupta2008information}. However, the latter focuses on the effect of career concerns on market prices and trading volume. In both cases the investors are risk-neutral return maximizers.

Although the "herding" phenomenon is not our main focus in this paper, it is consistent with our results. We prove that a potentially suboptimal outcome exists, because of the need of agents to level their performance with the ones of their competitors. 

In general, the manipulability abilities of unskilled agents were proven to exist in more than a few papers, such as \cite{Lehrer2001,Sandroni2003,Sandroni2003a,Shmaya2008}; and \cite{Olszewski2008}.

\cite{foster2010gaming} proves that it is almost impossible to generate a bonus plan such that skilled agents are rewarded while unskilled ones are eliminated from the market. This result is based on the assumption that the agents' strategies and tactics are not observable. Recently, \cite{He2015} showed that this result could be altered when a liquidation boundary is set along with requiring the agents to deposit their own money to offset the potential losses.

\cite{Holmstrom1991} studies the principle-agent problem in a completely different set-up than ours, yet it reaches a similar conclusion that sometimes the only optimal incentives plan is constant. It proves that  agents would divert their effort to where it is easier to measure their performance in order to increase their payoff and derives implications with respect to job design. For example, teachers that receive bonuses according to the students' test scores, might neglect other important aspects. 

Assuming that some actions are known to the risk-neutral investor, \cite{Carroll2015} examines contracts where the investor is evaluating the performance of the agent through a worst-case criterion consistent with her own knowledge. Under these conditions, it shows that linear contracts are optimal. Nevertheless, one should not confuse this linearity and our proposed linear bonus plan. In \cite{Carroll2015} the conclusion is that a fixed share of the return is optimal, whereas we suggest a contract where the linearity is taken with comparison to other agents.

\subsection{Outline of the paper}
The paper is organized as follows. Section \ref{Section - example} presents a simple $2$-firm bonus-plan problem that illustrates the drawbacks of results-based incentives in competitive non-deterministic markets. In section \ref{Section - The model} we present the model along with the main assumptions. Section \ref{Section - Main results} includes the main results divided into two parts: In Subsection \ref{Subsection - Fixed set of actions} we show how to formulate an optimal bonus plan for any specific market; In Subsection \ref{Subsection - A universal bonus plan} we prove that every bonus plan can fail as the market evolves, unless it is independent of the outcomes. Concluding remarks and additional comments are given in Section \ref{Section - Conclusions}. 

\section{A motivating example - a $2$-firm bonus-plan problem} \label{Section - example}
A decision maker (DM) wishes to invest some funds through an investment firm. There are two possible investments firms, Firm $1$ and Firm $2$. The DM has already some funds invested through both of the firms and she wishes to allocate the additional funds according to some predetermined rule that depends on the firms' yearly earnings.

The goal of the DM is to maximize the expected earnings of past and future investments. However, as she is not aware of the possible bonds in the market, she chooses to allocate the entire available amount to the company which presents the highest earnings by the end of the following year. In case both firms present the same earnings, the funds are divided equally between the two firms.

The firms can invest either in Bond $X_1$, which gives $5\%$ per year with probability (w.p.) $1$, or in Bond $X_2$, which gives $5.1\%$ per year w.p. $0.6$ and $0\%$ per year w.p. $0.4\%$. The goal of the firms is to maximize their expected earnings and to maximize the overall funds they manage, according to the utility functions given below. 

In this example we prove that, although investing in Bond $X_2$ is suboptimal in expectation relative to an investment in Bond $X_1$ (and also riskier), the unique equilibrium is when both firms invest in Bond $X_2$.

Formally, let $A=\{X_1,X_2\}$ be the set of actions of the firms. We consider the actions $X_1$ and $X_2$ to be random variables on the same probability space with the distributions,
\begin{equation*} \label{Eq - Example bonds}
X_1 =  1.05 \ \mbox{per year w.p. } 1,
\ \ \ \
X_2=
\begin{cases}
1.051,  & \mbox{per year w.p. } \frac{3}{5}, \\
1.0,  & \mbox{per year w.p. } \frac{2}{5}.
\end{cases}
\end{equation*}
These distributions are common knowledge of the firms. Fix $\L \in (0,1)$ and define $U_1$ and $U_2$ to be the utility functions of Firm $1$ and Firm $2$ respectively such that
$$
U_1(X_i,X_j)= \L X_i +(1-\L )\left[ \mathbf{1}_{\{X_i>X_j\}} + \dfrac{\mathbf{1}_{\{X_i=X_j\}}}{2}\right],
$$
and
$$
U_2(X_i,X_j)= \L X_j +(1-\L )\left[ \mathbf{1}_{\{X_j>X_i\}} + \dfrac{\mathbf{1}_{\{X_j=X_i\}}}{2}\right],
$$
for every action $X_i$ of Firm $1$ and for every action $X_j$ of Firm $2$. That is, the utility functions are a weighted average of the earnings and the additional funds distributed by the DM. The following lemma shows that the optimal action of both firms is $X_2$.

\begin{lemma} \label{Lemma - example}
The unique equilibrium is when both firms choose to invest only in Bond $X_2$.
\end{lemma}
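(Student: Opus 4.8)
The plan is to recognize this as a symmetric two-player, two-action normal-form game and to show that $X_2$ is a strictly dominant action for each firm; uniqueness of the equilibrium in which both firms play $X_2$ then follows immediately. Since $U_2(X_i,X_j)$ is obtained from $U_1(X_i,X_j)$ by interchanging the roles of the two firms, the game is symmetric, so I would analyze Firm $1$'s incentives only and invoke symmetry for Firm $2$.

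First I would compute the four expected-payoff cells $E[U_1(\cdot,\cdot)]$. The earnings part is immediate, $E[X_1]=1.05$ and $E[X_2]=\tfrac{3}{5}(1.051)+\tfrac{2}{5}(1.0)=1.0306$, so the real content is the expectation of the comparison term $\mathbf{1}_{\{X_i>X_j\}}+\tfrac12\mathbf{1}_{\{X_i=X_j\}}$. Using independence of the two draws and the given distributions, the two diagonal profiles each yield a certain tie and hence comparison value $\tfrac12$, while the off-diagonal profiles yield $\tfrac25$ (Firm $1$ plays $X_1$ against $X_2$: it wins only when the opponent draws $1.0$) and $\tfrac35$ (Firm $1$ plays $X_2$ against $X_1$: it wins only when it draws $1.051$). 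The crucial step is then to compute Firm $1$'s gain from switching to $X_2$ and to observe that it is the same against either action of Firm $2$:
\[
(1.0306-1.05)\lambda+\bigl(\tfrac35-\tfrac12\bigr)(1-\lambda)
=(1.0306-1.05)\lambda+\bigl(\tfrac12-\tfrac25\bigr)(1-\lambda)
=\tfrac1{10}(1-\lambda)-(1.05-1.0306)\lambda,
\]
because both comparison-term gains equal $\tfrac1{10}$. The deviation gain is thus a single scalar, independent of the opponent's choice, and verifying the inequality $\tfrac1{10}(1-\lambda)>(1.05-1.0306)\lambda$ — that the gain in the competition term dominates the tiny expected-earnings loss — establishes that $X_2$ strictly dominates $X_1$ for Firm $1$, and by symmetry for Firm $2$, making the profile in which both play $X_2$ the unique Nash equilibrium.

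The main obstacle is the comparison term: one must compute the win and tie probabilities from the joint distribution of the two independent draws rather than from the marginal expectations, since it is precisely the asymmetry between the two off-diagonal winning probabilities ($\tfrac25$ versus $\tfrac35$) that drives the result. The pleasant simplification is that these two probabilities differ by exactly $\tfrac1{10}$ in both comparisons, which collapses the dominance argument to the single scalar inequality above; once that is in hand, the conclusion is a one-line consequence of strict dominance.
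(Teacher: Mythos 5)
Your approach is the same as the paper's: reduce the problem to the $2\times 2$ expected-payoff matrix and argue that $X_2$ strictly dominates $X_1$ for each firm. Your computation of the cells is the faithful one given the definition $U_1(X_i,X_j)=\lambda X_i+(1-\lambda)[\cdots]$: the earnings term in a profile where Firm $1$ plays $X_2$ must be $\lambda E[X_2]=1.0306\lambda$, and the deviation gain from switching to $X_2$ is $0.1(1-\lambda)-(1.05-1.0306)\lambda$ against either opponent action. (The paper's own table instead attaches $1.05\lambda$ to the cell $E[U_1(X_2,X_1)]$ and $1.0306\lambda$ to $E[U_1(X_1,X_2)]$, i.e., it weights the \emph{opponent's} expected return; under that slip the earnings terms cancel from the dominance comparison and the gain is exactly $0.1(1-\lambda)>0$ for all $\lambda\in(0,1)$.)

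The gap is in your last step. The inequality $\tfrac{1}{10}(1-\lambda)>(1.05-1.0306)\lambda$ is not something that can simply be ``verified'': it is equivalent to $0.1>0.1194\lambda$, i.e., to $\lambda<500/597\approx 0.838$, and it fails for $\lambda$ close to $1$. Since the lemma is asserted for an arbitrary fixed $\lambda\in(0,1)$, your argument does not cover $\lambda\in[500/597,1)$; indeed, on that range your (correct) payoff matrix makes $X_1$ the weakly or strictly dominant action, so $(X_2,X_2)$ is not an equilibrium there and the statement itself fails. The paper's proof reaches the stated conclusion only because of its miscomputed earnings terms. To make your write-up airtight you would need either to restrict the range of $\lambda$ or to note explicitly that the claimed dominance holds only when the competition gain $0.1(1-\lambda)$ exceeds the expected-earnings loss $0.0194\lambda$; as written, the final inequality is asserted without justification and is false for part of the parameter range.
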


Lemma \ref{Lemma - example} exemplifies an interesting situation. The expected earnings per year of Bond $X_2$ is $3.006\%$  while the expected earnings per year of Bond $X_1$ is $5\%$. Therefore, all the parties involved, firms and the DM, would benefit in case both firms choose to invest in $X_1$. Nevertheless, the unique equilibrium is $(X_2,X_2)$ and this market fails to reach an optimal outcome.

\begin{proof}
Let us compute the expected utility of Firm $1$ for every profile of pure actions. There are four profiles we need to consider and the expected utility of Firm $1$ in each is:
\begin{eqnarray*}
E[U_{1}(X_1,X_1)] & = & 1.05 \L + 0.5(1-\L) = 0.5 + 0.55\L; \\
E[U_{1}(X_2,X_1)] & = & 1.05\L  + 0.6(1-\L) = 0.6 + 0.45\L;  \\
E[U_{1}(X_1,X_2)] & = & 1.0306\L  + 0.4(1-\L) = 0.4 + 0.6306 \L;  \\
E[U_{1}(X_2,X_2)] & = & 1.0306\L  + 0.5(1-\L) = 0.5 + 0.5306 \L.
\end{eqnarray*}
A similar computation holds for Firm $2$. Inserting these expected values to a $2$-player game the game presented in Table \ref{Table - Example}.

\begin{table}[h!]
\begin{center}
\begin{tabular}{c|c|c|} 
   & $X_1$ & $X_2$ \\
  \hline
  $X_1$ & $0.5 + 0.55\L$, $0.5 + 0.55\L$ & $0.4 + 0.6306 \L$ , $0.6 + 0.45\L$ \\
  \hline
  $X_2$ & $0.6 + 0.45\L$, $0.4 + 0.6306 \L$ & $0.5 + 0.5306 \L$, $0.5 + 0.5306 \L$ \\
  \hline
\end{tabular}
\end{center}
\caption{The game played by Firm $1$ and Firm $2$ as a function of $\L$.}\label{Table - Example}
\end{table}

For every $\L \in (0,1)$ and for every Firm $i\in \{1,2\}$, action $X_2$ strongly dominates action $X_1$ and the result follows.
\hfill
\end{proof}

This example illustrates the problem when there are two firms and two pure actions. The model, presented in Section \ref{Section - The model}, relates to a more general case in terms of firms and actions.

\section{The model} \label{Section - The model}
Fix $n \in \mathbb{N}$ and define $N=\{1,\dots,n\}$ to be a set of indices. For every $i\in N$, let $X_i$ be a random variable with a countable support $S_i$ and finite expectation defined on the probability space $\Omega$. We refer to $X_i$ as a \emph{pure action}. Denote the set of pure actions $X_1,\dots,X_n$ by $A=\{X_1,\dots,X_n\}$ and denote the union of the supports of the random variables in $A$ by $S_A=\bigcup_{i\in N}S_i$.

A \emph{mixed action} (or \emph{strategy}) is a convex combination of random variables in $A$ such that $q = \sum_{i=1}^n q_i X_i$ and $(q_1,\dots,q_n)\in \D(A)$.\footnote{We use $\D(B)$ to denote the set of probability measures on the set $B$.}  Note that a mixed action $q$ is a random variable and $q(\omega) = \sum_{i=1}^n q_i X_i(\omega)$ for every $\omega \in \Omega$.

Fix a natural $k \geq 2$. A \emph{bonus plan} (BP) of dimension $k$ is a function $f:\R^k \to \R^k$ such that for every $r \in \mathbb{R}^k$,
\begin{eqnarray}\label{eq:fixed sum} \sum_{i=1}^k f_i(r)=1
\end{eqnarray}
 and $f(r) \in [0,1]^k.$
A $k$-player bonus-plan problem consists of a DM and $k$ players (or firms). The DM, who is not familiar with the pure actions in $A$, publicly commits to a bonus plan\footnote{We will not henceforth relate to the dimension of the function $f$ when it is clear from the context.} $f$, which defines a $k$-player game $\Gf$ as follows. Every Player $i$ chooses a strategy, denoted by $\si$, based on the set of pure actions $A$. Denote $\s^f= (\spa,\dots,\s^f_k)$. An $\omega \in \Omega$ is randomly chosen and each Player $i$ receives $f_i\left(\s^f(\omega)\right)$, while the DM receives $\sum_{i=1}^k\si(\omega)$.

\begin{definition}
A profile of strategies $\sf$ is a \emph{Nash equilibrium} in $G_f$ if
\begin{equation}
E\left[f_i\left(\si,\s^f_{-i}\right)\right] \geq E\left[f_i\left(q,\s^f_{-i}\right)\right], \nonumber
\end{equation}
for every Player $i$ and for every strategy $q$.
\end{definition}

A convenient way to consider a $k$-player bonus-plan problem is by assuming that the players are playing a game induced by the DM, which implies that the DM is essentially a game designer or a mechanism designer.
The players in $\Gf$ wish to maximize their expected payoffs in equilibrium, that depend on the bonus plan $f$, and the DM wants to maximize her expected payoff which is a function of the equilibrium strategies of the players in $\Gf$. Formally, Player $i$ wishes to maximize $E\left[f_i\left(\sf\right) \right]$  where $\sf$ is a Nash equilibrium in $G_f$, while the DM wishes the maximize  $E\left[\sum_{i=1}^k\si\right]$.

The bouns-plan problem has some similarities to the principle-agent problem. The DM represents the principle who is interested in motivating her agents (e.g., investments firms) to produce optimal expected earnings. However, the example given in Section \ref{Section - example} shows that the appropriate method is not always the simple one. Sometimes a simple or an intuitive bonus plan could generate a sub-optimal result for all sides. A general result motivated by this example is given in the following section.

\begin{definition}
A bonus plan $f$ is \emph{optimal}, if the induced $k$-player game $\Gf$ has an equilibrium $\sf$ such that
\begin{equation} \label{Eq - Optimality condition}
E\left[\sum_{i=1}^k\si\right] = k \max_{i\in N} E[X_i]. 
\end{equation}
\end{definition}

That is, $f$ is optimal if there exist an equilibrium $\sf$ in $\Gf$  where the mixed actions of all the players are weighted averages of pure actions with maximal expected value.

\begin{remark}
In order to simplify the computations and notations, we assume that
\begin{equation}\label{Ineq - RVs Expectation condition}
E[X_1]> E[X_i],
\end{equation}
for every $2\leq i\leq n$. This implies that a bonus plan $f$ is optimal if and only if $(X_1,\dots,X_1)$ is an equilibrium in $\Gf$.
\end{remark}

\begin{remark}
For every bonus plan $f$ and for every vector of pure actions $X$, it follows that
$$E\left[\sum_{i=1}^kf(X)\right]=1.$$
Therefore, the game $\Gf$ is a fixed-sum game in the sense that the sum of the expected payoffs of all the players is $1$. For every bonus plan $f$ and for every game $\Gf$ we can define an auxiliary bonus plan $\f$ such that $\f(r) = f(r) -\left(\frac{1}{k},\dots,\frac{1}{k}\right)$ for every $r \in \R^k$. The induced game $G_{\f}$ is a symmetric zero-sum $k$-player game.
\end{remark}
\section{Main results} \label{Section - Main results}
In this section we prove the two main results of the paper. The first result shows that for every finite set of random variables $A$, there exists an optimal bonus plan $f$. The second result states that for every non-trivial (i.e., non-constant) bonus plan $f$, there exists a set $A$ such that $f$ is not optimal.

The combination of these results is significant when considering contracts planning. On the one hand, in any non-deterministic market the DM can design a contract that ensures her interests are kept by the agents. On the other hand, if the market is dynamic, meaning that the market is constantly changing in terms of possible actions, then any bonus plan which is not trivial can eventually lead to a suboptimal result. In other words, the only bonus plan that assures that the players are acting according to the DM's preferences, is a bonus plan which is, paradoxically, independent of the players' actions.

\subsection{Fixed set of actions} \label{Subsection - Fixed set of actions}
The first theorem we prove is constructive. For every set of actions $A$, we specify an optimal bonus plan $f$. The optimal bonus plan $f$, that we propose, is linear in the differences of the earning of the players.

We start with the simple case of a bounded set $S_A$. Fix a set of pure actions $A$. Let $I_A$ be the minimal, finite, and closed interval containing all the values of $S_A$ and let $M_A$ be the absolute value of the maximal element in $I_A$ (w.r.t the absolute value). That is, for every $\omega \in \Omega$ and every action $X_i$, it follows that $X_i(\omega) \in I_A$ and $|X_i(\omega)|\leq M_A$. Define the $M_A$-\emph{Linear Bonus Plan} $f$ by
\begin{equation}
f_{i}(r)=\nonumber
\begin{cases}
\frac{1}{k} +\frac{\sum_{j=1}^k (r_i-r_j)}{2k(k-1)M_A}, &\mbox{if } r \in I_A^k, \\
\frac{1}{k}, & \mbox{if } r \notin I_A^k.
\end{cases}
\nonumber
\end{equation}
One can verify that $f$ is well-defined, since for every $r \in \R^k$, the equality $\sum_i f_i(r) =1$ holds and $f(r) \in [0,1]^k$.
\begin{theorem} \label{Theorem - LAR is optimal}
For every set of pure actions $A$ with finite support $S_A$, the $M_A$-Linear Bonus Plan is optimal.
\end{theorem}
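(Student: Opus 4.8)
The plan is to verify directly that the profile $(X_1,\dots,X_1)$ is a Nash equilibrium of $\Gf$; by the Remark reducing optimality to this profile (which relies on $E[X_1]>E[X_i]$ for $i\geq 2$), this is exactly what optimality of $f$ requires. So I fix a player $i$, assume every other player plays $X_1$, let player $i$ deviate to an arbitrary strategy $q=\sum_{l=1}^n q_l X_l$, and show that the expected bonus to player $i$ cannot exceed its value at $q=X_1$.

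The first point is that the realized profile never leaves the linear region of $f$. Because $I_A$ is an interval it is convex, so for every $\omega$ the value $q(\omega)=\sum_l q_l X_l(\omega)$ is a convex combination of points of $I_A$ and hence lies in $I_A$; the remaining coordinates equal $X_1(\omega)\in I_A$. Thus the realized vector $r(\omega)$ lies in $I_A^k$ for every $\omega$, and the constant branch $1/k$ is never invoked. This is the one step I would flag as the crux: the whole construction hinges on $I_A$ being convex, since a deviation pushing some coordinate into the flat region would sever the monotone link between a player's own return and her bonus.

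Next comes a short telescoping computation. With $r_i=q(\omega)$ and $r_j=X_1(\omega)$ for $j\neq i$, the defining sum collapses,
\[
\sum_{j=1}^k \bigl(r_i-r_j\bigr)=\sum_{j\neq i}\bigl(q(\omega)-X_1(\omega)\bigr)=(k-1)\bigl(q(\omega)-X_1(\omega)\bigr),
\]
so that
\[
f_i\bigl(r(\omega)\bigr)=\frac{1}{k}+\frac{(k-1)\bigl(q(\omega)-X_1(\omega)\bigr)}{2k(k-1)M_A}=\frac{1}{k}+\frac{q(\omega)-X_1(\omega)}{2k\,M_A}.
\]
Taking expectations and using linearity yields
\[
E\bigl[f_i\bigr]=\frac{1}{k}+\frac{E[q]-E[X_1]}{2k\,M_A}.
\]

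The final step invokes maximality of $E[X_1]$. Since $E[q]=\sum_l q_l E[X_l]\leq E[X_1]$, the expected bonus is at most $1/k$, with equality at $q=X_1$ (indeed whenever $E[q]=E[X_1]$). Hence no deviation is profitable, and as $i$ was arbitrary the profile $(X_1,\dots,X_1)$ is an equilibrium; along it each player earns $1/k$ and the DM collects $k\,E[X_1]=k\max_{i}E[X_i]$, so $f$ is optimal. I also note that, because $E[f_i]$ is affine in the mixing weights of $q$, it would suffice to test the pure deviations $q=X_l$, the extreme points of $\D(A)$; the computation above simply disposes of all mixed deviations at once.
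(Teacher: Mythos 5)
Your proposal is correct and follows essentially the same route as the paper: verify that $(X_1,\dots,X_1)$ is an equilibrium by computing $E[f_i(q,X_1,\dots,X_1)]=\tfrac{1}{k}+\tfrac{E[q]-E[X_1]}{2kM_A}\le\tfrac{1}{k}$ and invoking the maximality of $E[X_1]$. Your explicit check that convexity of $I_A$ keeps every realized profile in the linear branch is a detail the paper leaves implicit, but it does not change the argument.
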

\begin{proof}
We have to prove that $(X_1,\dots,X_1)$ is an equilibrium in $G_f$, or equivalently, that for every Player $i$ and for every strategy $q = \sum_{i=1}^n q_i X_i$ of Player $i$ (when $q\neq X_1$), the inequality $E\left[f_i\left(X_1,\dots,X_1\right)\right] < E\left[f_i\left(q,X_1,\dots,X_1\right)\right] \nonumber $ holds. Without loss of generality, assume that $i=1$, therefore
\begin{eqnarray}
E\left[f_1\left(q,X_1,\dots,X_1\right)\right] & = & E\left[f_1\left(\sum_{i=1}^n q_i X_i,X_1,\dots,X_1\right)\right]  \nonumber \\
& = & E\left[\frac{(k-1)\sum_{i=1}^n q_i X_i-(k-1)X_1}{2k(k-1) M_A} + \frac{1}{k} \right]  \nonumber  \\
& = & \frac{1}{2k M_A} E\left[\sum_{i=1}^n q_iX_i - X_1 \right] + \frac{1}{k}  \nonumber \\
& < & \frac{1}{k} = E\left[f_i\left(X_1,\dots ,X_1\right)\right], \nonumber
\end{eqnarray}
when the second and last equalities follow from the definition of $f$ and the inequality follows from the fact that $q\neq X_1$ and $E[q] = E\left[\sum_{i=1}^n q_i X_i \right]  < E[X_1]$. \hfill
\end{proof}

The fact that $(X_1,\dots,X_1)$ is an equilibrium is not surprising. It follows directly from the linearity of $f_i(r)$ in $r_i$. This linearity implies that $X_1$ is a dominant strategy for every player and therefore, the equilibrium is unique.

The following theorem presents a Bounded and Linear Bonus Plan $f$ for the general case (i.e., for the case that $S_A$ is not bounded) and proves that it is optimal.

\begin{theorem} \label{Theorem - BLAR is optimal}
For every set of pure actions $A$, there exists an optimal bonus plan.
\end{theorem}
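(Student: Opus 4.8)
The plan is to reuse the linear construction of Theorem \ref{Theorem - LAR is optimal}, but to truncate the relevant region at a finite level $M$ chosen large enough to absorb the (now possibly unbounded) tails. Concretely, I would define the \emph{Bounded and Linear Bonus Plan} exactly as the $M$-Linear Bonus Plan, with the interval $I_A$ replaced by $[-M,M]$: set $f_i(r)=\frac{1}{k}+\frac{\sum_{j=1}^k(r_i-r_j)}{2k(k-1)M}$ when $r\in[-M,M]^k$ and $f_i(r)=\frac{1}{k}$ otherwise. The same verification as before shows that $f$ is a legitimate bonus plan (components in $[0,1]$, summing to $1$) for any $M>0$, so the entire content of the theorem is to show that $M$ can be fixed, depending only on $A$, so that $(X_1,\dots,X_1)$ is an equilibrium of $G_f$ (which, by the Remark, is equivalent to optimality).

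By symmetry it suffices to rule out a profitable deviation for Player $1$ against $(X_1,\dots,X_1)$. For a strategy $q=\sum_{i=1}^n q_iX_i$, let $B_M=\{|q|\le M\}\cap\{|X_1|\le M\}$ be the event on which the realized profile $(q,X_1,\dots,X_1)$ lands in the linear region. On $B_M$ the bonus reduces to $\frac{1}{k}+\frac{q-X_1}{2kM}$ and off $B_M$ it equals $\frac{1}{k}$, so
\begin{equation}
E[f_1(q,X_1,\dots,X_1)]=\frac{1}{k}+\frac{1}{2kM}\,E[(q-X_1)\mathbf{1}_{B_M}]. \nonumber
\end{equation}
Since $E[f_1(X_1,\dots,X_1)]=\frac{1}{k}$, the goal collapses to the single estimate $E[(q-X_1)\mathbf{1}_{B_M}]<0$ for every $q\neq X_1$.

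The crux — and the one real obstacle — is making this estimate \emph{uniform} over the infinitely many mixed deviations $q$, particularly those arbitrarily close to $X_1$, where the favorable drift $E[q-X_1]$ becomes arbitrarily small. I would control both effects by the common scale $s:=1-q_1=\sum_{i\ge2}q_i$. On the one hand, assumption (\ref{Ineq - RVs Expectation condition}) gives $E[q-X_1]=\sum_{i\ge2}q_i(E[X_i]-E[X_1])\le-\gamma s$, where $\gamma:=\min_{i\ge2}(E[X_1]-E[X_i])>0$. On the other hand, writing $q-X_1=\sum_{i\ge2}q_i(X_i-X_1)$ and using that $Y:=\max_i|X_i|$ is integrable (a finite sum of integrable variables), I can bound the tail contribution by
\begin{equation}
\big|E[(q-X_1)\mathbf{1}_{B_M^{\,c}}]\big|\le\sum_{i\ge2}q_i\,E\big[(|X_i|+|X_1|)\mathbf{1}_{\{Y>M\}}\big]\le s\,\eta(M), \nonumber
\end{equation}
where $\eta(M):=\max_{i\ge2}E[(|X_i|+|X_1|)\mathbf{1}_{\{Y>M\}}]\to0$ as $M\to\infty$ by dominated convergence; here I use $\{|q|>M\}\subseteq\{Y>M\}$, which holds because $|q|\le Y$ pointwise. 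Combining the two bounds yields $E[(q-X_1)\mathbf{1}_{B_M}]\le-\gamma s+s\,\eta(M)=s(\eta(M)-\gamma)$.

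Finally I would fix $M$ large enough that $\eta(M)<\gamma$, which is possible since $\eta(M)\to0$ and $\gamma$ depends only on $A$. Then for every $q\neq X_1$ we have $s>0$, hence $E[(q-X_1)\mathbf{1}_{B_M}]<0$, so Player $1$ strictly prefers $X_1$; by symmetry $(X_1,\dots,X_1)$ is an equilibrium and $f$ is optimal. The decisive point is that both the gain from the expectation gap and the loss from truncation scale linearly in $s$, so a single truncation level works simultaneously for all deviations, however close to $X_1$.
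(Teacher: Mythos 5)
Your proof is correct, and while it shares the paper's overall strategy --- truncate the linear bonus plan of Theorem \ref{Theorem - LAR is optimal} at a finite scale $M$ and show the truncation error cannot overturn the expectation gap --- the way you obtain a \emph{uniform} choice of $M$ is genuinely different and, in my view, cleaner. The paper fixes, for each deviation $q\neq X_1$, a constant $c_q=E[X_1-q]>0$ and a threshold $M(q)$ splitting the drift and the tail as in \eqref{Ineq - The expectation with M(q) 1}--\eqref{Ineq - The expectation with M(q) 2}, and then invokes compactness of the simplex of mixed actions to extract a single $c=\min c_q>0$ and a single $M$ valid for all $q$. That finite-subcover step is delicate precisely in the regime you identify as the crux: as $q\to X_1$ the gap $c_q$ tends to $0$, so no fixed $c>0$ can make \eqref{Ineq - The expectation with M(q) 3} hold for \emph{every} $q\neq X_1$, and the point $q=X_1$ itself is not covered by any $B(q)$. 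Your normalization by $s=1-q_1$ dissolves this difficulty: both the favorable drift ($\le -\gamma s$) and the truncation error ($\le s\,\eta(M)$) scale linearly in $s$, so after dividing by $s$ the problem reduces to finitely many dominated-convergence limits, one per pure action, and a single $M$ with $\eta(M)<\gamma$ works for all deviations simultaneously, however close to $X_1$. The only cosmetic difference is that you truncate on the event $r\in[-M,M]^k$ while the paper truncates on the event that the linear formula leaves $[0,2/k]$; both yield well-defined bonus plans and both reduce the expected payoff to $\frac{1}{k}$ plus a scaled truncated expectation, so nothing hinges on this choice. In short: same construction, but your uniformity argument is a genuine improvement over the compactness route.
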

\begin{proof}
We extend Theorem \ref{Theorem - LAR is optimal} for an unbounded set $S_A$ by constructing a new optimal bonus plan. Let $\hat{r}_{q,i}=(q,(X_1)_{j\neq i})$ be a vector of actions where player $i$ chooses the mixed action $q$ and every other player plays $X_1$. Note that for every $q\neq X_1$, the expected value $E\left[q-X_1\right]=-c_q <0$ for some $c_q>0$. Since the expected value is negative and finite, there exists a real number $M(q)>0$ such that for every $m \geq M(q)$
\begin{equation} \label{Ineq - The expectation with M(q) 1}
E\left[(q-X_1) \mathbf{1}_{\{|q-X_1|\leq m\}}\right] < -\frac{c_q}{2},
\end{equation}
and
\begin{equation} \label{Ineq - The expectation with M(q) 2}
\sum_{|l|>m}|l|\Pr(q-X_1=l)<\frac{c_q}{2}.
\end{equation}

Ineq. \eqref{Ineq - The expectation with M(q) 1} and \eqref{Ineq - The expectation with M(q) 2} are strict, therefore there exists an open set $B(q)$ containing $q$ such that both inequalities hold for every $q' \in B(q)$ where $q'\neq X_1$. The set of mixed actions $q$ is compact as it can be represented by a $(n-1)$-simplex, hence the set of open sets $\{B(q)\}$ is an open cover. It is known that there exists a finite subcover $B_A$. Fix $c = \min_{q\in B_A}c_q > 0$ and let $M \geq \max_{q \in B_A} M(q)$ be a positive number such that for every $m \geq M$ and every $q\neq X_1$,
\begin{equation} \label{Ineq - The expectation with M(q) 3}
E\left[(q-X_1) \mathbf{1}_{\{|q-X_1|\leq m\}}\right] < -\frac{c}{2},
\end{equation}
and
\begin{equation} \label{Ineq - The expectation with M(q) 4}
\sum_{|l|>m}|l|\Pr(q-X_1=l)<\frac{c}{2}.
\end{equation}
Note that  \eqref{Ineq - The expectation with M(q) 4} hold for $q=X_1$ with weak inequality.

Define the \emph{Bounded-Linear Bonus Plan} $f$ by
\begin{equation}
f_{i}(r)=\nonumber
\begin{cases}
\frac{1}{k} +\frac{\sum_{j=1}^k (r_i-r_j)}{2k(k-1)M}, &\mbox{if \ } 0\leq \frac{1}{k} +\frac{\sum_{j=1}^k (r_i-r_j)}{2k(k-1)M}\leq \frac{2}{k}, \\
\frac{1}{k}, & \mbox{if \ } \rm{otherwise}.
\end{cases}
\nonumber
\end{equation}
One can verify that $f$ is a well-defined bonus plan.

We prove that $(X_1,\dots,X_1)$ is an equilibrium in the game induced by the Bounded-Linear Bonus Plan by showing w.l.o.g. that $$
E[f_1(q,X_1,\dots,X_1)] < \frac{1}{k} = E[f_1(X_1,\dots,X_1)],
$$
for every $q \neq X_1$. Note that
\begin{eqnarray}
0 \leq \frac{1}{k} +\frac{\sum_{j\neq i} (q-X_1)}{2k(k-1)M}\leq \frac{2}{k} \ \  & \Leftrightarrow & \ \  -2M \leq q-X_1 \leq 2M \nonumber \\ \nonumber
& \Leftrightarrow & \ \ |q-X_1|\leq 2M.
\end{eqnarray}
Thus,
\begin{eqnarray}
E[f_1(q,X_1,\dots,X_1)] & = & E\left[\left(\frac{1}{k} + \frac{\sum_{j\neq i} (q-X_1)}{2k(k-1)M}\right) \mathbf{1}_{\{|q-X_1| \leq 2M \}} \right]  \nonumber \\
& + & \frac{1}{k} \Pr(|q-X_1| > 2M) \nonumber \\
& \leq & \frac{1}{k} + \frac{1}{2M k} E\left[(q-X_1) \mathbf{1}_{\{|q-X_1| \leq 2M\}} \right] \nonumber \\
& + & \frac{1}{k}\sum_{|l|>2M}\frac{|l|}{2M} \Pr(q-X_1=l). \nonumber
\end{eqnarray}
It follows from \eqref{Ineq - The expectation with M(q) 3} and \eqref{Ineq - The expectation with M(q) 4} that
\begin{eqnarray}
E[f_1(q,X_1,\dots,X_1)] & < & \frac{1}{k} -\frac{c}{4Mk} + \frac{1}{2Mk}\sum_{|l|>2M}|l|\Pr(q-X_1=l) \nonumber \\
& < &  \frac{1}{k} -\frac{c}{4Mk} + \frac{c}{4Mk} = \frac{1}{k}, \nonumber
\end{eqnarray}
as previously stated.
\hfill
\end{proof}

\subsection{A universal bonus plan} \label{Subsection - A universal bonus plan}

A \emph{universal bonus plan} $f$ is a non-constant bonus plan such that for every finite set of actions, $f$ is optimal. Specifically, A universal bonus plan $f$ is a non-constant bonus plan such that for every set of actions $X_1,\dots,X_n$, where $E[X_1] >E[X_j]$ for every $2\leq j\leq n$, the profile of actions $(X_1,\dots,X_1)$ is an equilibrium in $G_f$. The following theorem proves such a plan does not exist.

\begin{theorem} \label{Theorem - Imopssibility theorem}
If there are two players, a universal bonus plan does not exist.
\end{theorem}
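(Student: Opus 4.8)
The plan is to fix a universal bonus plan $f$ and show it must be constant, contradicting the non-triviality requirement. Since $k=2$ the plan is determined by $f_1$ alone (because $f_2=1-f_1$); write $\phi(x,y):=f_1(x,y)\in[0,1]$. I will use only two-action markets $A=\{X_1,X_2\}$ with $E[X_1]>E[X_2]$, where I am free to prescribe the joint law of $(X_1,X_2)$ on a suitable $\Omega$. For any such market, demanding that $(X_1,X_1)$ be an equilibrium and testing the pure deviation to $X_2$ by each player gives the two necessary inequalities $E[\phi(X_2,X_1)]\le E[\phi(X_1,X_1)]$ (from Player $1$) and $E[\phi(X_1,X_2)]\ge E[\phi(X_1,X_1)]$ (from Player $2$, using $f_2=1-f_1$). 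These two inequalities, quantified over all admissible joint laws, are the only inputs.

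First I would pin down the symmetric part of $\phi$. Taking $X_1,X_2$ supported on $\{a,b\}$ with $a>b$ and a joint law whose two marginal means differ by an arbitrarily small positive amount, the two inequalities become linear inequalities in the four numbers $\phi(a,a),\phi(b,b),\phi(a,b),\phi(b,a)$; letting the mean gap tend to $0$ squeezes them into the functional equation $\phi(a,b)+\phi(b,a)=\phi(a,a)+\phi(b,b)$, together with the monotonicity $\phi(a,b)\ge\phi(a,a)\ge\phi(b,a)$. Writing $\psi(a):=\phi(a,a)$, the functional equation says precisely that the symmetric part of $\phi$ equals $\tfrac12(\psi(x)+\psi(y))$, so $\phi(x,y)=\tfrac12(\psi(x)+\psi(y))+\tfrac12\rho(x,y)$ with $\rho$ antisymmetric and, by monotonicity, $\rho(x,y)\ge 0$ whenever $x>y$.

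Next I would take the reference action degenerate, $X_1\equiv a$, and $X_2=W$ an arbitrary variable of mean below $a$. Substituting the decomposition, the two equilibrium inequalities collapse to a matched pair bounding $E[\rho(W,a)]$ from both sides; choosing $W$ two-valued at $a+h$ and $a-g$ and pushing its mean up to $a$ from below yields, for all $h,g>0$, the single relation $g\,\rho(a+h,a)\le h\,\rho(a,a-g)$. Here the crucial leverage is that $\rho$ is bounded (it lies in $[-2,2]$ since $\phi\in[0,1]$), whereas $a-g$ ranges over an unbounded set: if $\rho(a+h,a)$ were positive for some $h$, the relation would force $\rho(a,a-g)\ge (g/h)\,\rho(a+h,a)\to\infty$ as $g\to\infty$, a contradiction. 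Combined with $\rho\ge 0$ above the diagonal this gives $\rho(a+h,a)=0$ for all $h>0$ and all $a$, i.e. $\rho\equiv 0$. Feeding $\rho=0$ back into the same degenerate-reference inequalities produces $g(\psi(a+h)-\psi(a))=h(\psi(a)-\psi(a-g))$, so the forward difference quotient of $\psi$ at $a$ is independent of the increment; this makes $\psi$ affine, and an affine function bounded in $[0,1]$ is constant. Hence $\phi$ is constant, so $f$ is trivial.

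I expect the main obstacle to be the two places where boundedness of the plan is played off against the unbounded domain of the actions: annihilating the antisymmetric part $\rho$, and then forcing the affine diagonal $\psi$ to be constant. Some care is also needed in the limiting steps (letting the mean gap shrink to $0$), since the equilibrium inequalities are only assumed when the mean inequality is strict; these limits are harmless because each inequality is linear, hence continuous, in the mixing weights, so the weak inequalities survive passage to the boundary, and no continuity of $f$ itself is required.
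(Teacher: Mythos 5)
Your argument is correct, and it reaches the conclusion by a genuinely different route from the paper's. The paper works with four one-sided inequalities: for $x<y$ it derives $f_1(y,y)\ge f_1(x,y)$ and $f_2(y,y)\ge f_2(y,x)$ from degenerate deviations, and $f_1(x,x)\ge f_1(y,x)$ and $f_2(x,x)\ge f_2(x,y)$ via an auxiliary outcome $z$ sent to infinity against a mixing weight sent to $1$ (this is where boundedness of $f$ enters); it then sums all four and uses the fixed-sum identity $f_1+f_2\equiv 1$ to force every inequality to be an equality, so that $f_1$ is constant on $\{x,y\}^2$. You instead extract the exact functional equation $\phi(a,b)+\phi(b,a)=\phi(a,a)+\phi(b,b)$ at the outset by letting the marginal-mean gap shrink to zero --- the same $f_1+f_2=1$ leverage, applied before rather than after taking limits --- then decompose $\phi$ into a diagonal-determined symmetric part $\tfrac12(\psi(x)+\psi(y))$ and an antisymmetric part $\rho$, and use the boundedness-versus-unbounded-domain tension twice, once to annihilate $\rho$ and once to flatten $\psi$. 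The two proofs share their essential ingredients (pitting the two players' conditions against each other through the fixed-sum structure, and a rare large outcome exploiting $f\in[0,1]^2$), but yours is more structural while the paper's is more economical, needing only one limiting construction. One simplification available to you at the end: once $\rho\equiv 0$, the degenerate-reference inequalities give both $E[\psi(W)]\le\psi(a)$ and $E[\psi(W)]\ge\psi(a)$, hence $E[\psi(W)]=\psi(a)$, for \emph{every} $W$ with mean strictly below $a$; taking $W\equiv b$ constant with $b<a$ yields $\psi(b)=\psi(a)$ immediately, so the affine-difference-quotient step and the second appeal to boundedness are unnecessary.
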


\begin{proof}
Let $x<y<z$. We first prove that  $f_1(y,y)\ge f_1(x,y)$.
Assume to the contrary that $f_1(y,y)< f_1(x,y)$. Let $A= \{X_1,X_2\}$ be a set of two actions $X_1$ and $X_2$ with a joint probability distribution,
\begin{center}
  \begin{tabular}{ | c || c | c |}
    \hline
    $X_1 \backslash X_2$ & x & y \\ \hline \hline
    x & 0 & 0 \\ \hline
    y & 1 & 0 \\
    \hline
  \end{tabular}
\end{center}
Note that $E[X_1]>E[X_2]$. However,
$$E[f_1(X_2,X_1)] = f_1(x,y) > f_1(y,y) = E[f_1(X_1,X_1)],$$
implying that  $(X_1,X_1)$ is not an equilibrium in $G_f$, since Player $1$ can benefit from deviating to $X_2$. Thus,
\begin{equation} \label{Ineq 1 - General theorem for k=2}
f_1(y,y)\ge f_1(x,y).
\end{equation}
For similar reasons,
\begin{equation} \label{Ineq 2 - General theorem for k=2}
f_2(y,y)\ge f_2(y,x).
\end{equation}

Next we prove that $f_1(x,x)\ge f_1(y, x).$ Let $p$ be a number in $(0,1)$ and let $A= \{X_1,X_2\}$ be a set of two actions $X_1$ and $X_2$ with a joint probability distribution,
\begin{center}
  \begin{tabular}{ | c || c | c |}
    \hline
    $X_1 \backslash X_2$ & x & y \\ \hline \hline
    x & 0 & $p$  \\ \hline
      z & $1-p$ & 0  \\ \hline
  \end{tabular}
\end{center}
A direct computation shows that
\begin{eqnarray}
E[f_1(X_1,X_1)]-E[f_1(X_2,X_1)] & = & p (f_1(x,x) - f_1(y,x)) \nonumber \\
&  + & \left(1-p \right)(f_1(z,z)-f_1(x,z)). \nonumber
\end{eqnarray}
Recall that $f_1(z,z)-f_1(x,z)$ is bounded. If $f_1(x,x) < f_1(y,x)$, then there is $p$ smaller than, but sufficiently close to $1$, such that for every $z$, $E[f_1(X_1,X_1)]-E[f_1(X_2,X_1)]<0$. In other words,
\begin{equation} \label{Ineq 5}
E[f_1(X_1,X_1)]<E[f_1(X_2,X_1)].
\end{equation}
Now one can choose $z$ to be sufficiently large, so that $E[X_1]>E[X_2]$. Inequality \eqref{Ineq 5} implies that $(X_1,X_1)$ is not an equilibrium in $G_f$, since Player $1$ can benefit from deviating to $X_2$.  Hence,
\begin{equation} \label{Ineq 3 - General theorem for k=2}
f_1(x,x)\ge f_1(y,x).
\end{equation}
Similar reason imply that
\begin{equation} \label{Ineq 4 - General theorem for k=2}
f_2(x,x)\ge f_2(x,y).
\end{equation}

We now sum up inequalities (\ref{Ineq 1 - General theorem for k=2}), (\ref{Ineq 2 - General theorem for k=2}), (\ref{Ineq 3 - General theorem for k=2}) and (\ref{Ineq 4 - General theorem for k=2}) to obtain,
$f_1(y,y)+f_2(y,y)+ f_1(x,x) + f_2(x,x)\ge f_1(x,y)+ f_2(y,x)+f_1(y,x)+f_2(x,y)$. Due to Eq.\ (\ref{eq:fixed sum}) we obtain equality. Thus,  (\ref{Ineq 1 - General theorem for k=2}), (\ref{Ineq 2 - General theorem for k=2}), (\ref{Ineq 3 - General theorem for k=2}) and (\ref{Ineq 4 - General theorem for k=2}) are actually equalities. Therefore, $$
f_1(x,x) = f_1(x,y) = f_1(y,x) = f_1(y,y),
$$ and the proof is complete. \hfill
\end{proof}

A generalization of Theorem \ref{Theorem - Imopssibility theorem} to any number of players $k\geq 3$ is not trivial. In fact, the requirement that $f$ not be constant is not sufficient. For example, take any non-constant bonus plan $f:\R^k \to \R^k$ such that $f_i(r) = \frac{1}{k}$ for every $r\in \R^k$ with at least two identical coordinates. In this case, for every action $Xj$, the profile of strategies $(X_j,\dots,X_j)$ is an equilibrium, as any deviation of a single player has no influence on the payoffs.  On the other hand, requiring that all equilibria satisfy Eq. \eqref{Eq - Optimality condition} does not hold when $f$ is constant, as all profiles are equilibria. Therefore, we add a new requirement when the number of players is three or more.

\begin{definition}
A strongly universal bonus plan $f$, is a bonus plan where every profile of actions $\sf$ in $G_f$ that sustains Eq. \eqref{Eq - Optimality condition} is an equilibrium.
\end{definition}

\begin{theorem} \label{Theorem - Impossibility for k>2}
If $f$ is a strongly universal bonus plan, then every profile of actions is an equilibrium.
\end{theorem}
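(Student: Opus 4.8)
The plan is to prove the sharper statement that for every player $i$ the coordinate function $f_i(r)$ does not depend on its own argument $r_i$; this immediately yields the theorem. Indeed, granting this, take any market, any profile $\sigma=(\sigma_1,\dots,\sigma_k)$, any player $i$, and any deviation $q$. Since $f_i$ ignores its $i$-th coordinate, $f_i(q(\omega),\sigma_{-i}(\omega))=f_i(\sigma_i(\omega),\sigma_{-i}(\omega))$ for every $\omega$, hence $E[f_i(q,\sigma_{-i})]=E[f_i(\sigma_i,\sigma_{-i})]$, no deviation is profitable, and $\sigma$ is an equilibrium.

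To show (w.l.o.g.) that $f_1$ is independent of its first coordinate, I fix arbitrary reals $a,b,c_2,\dots,c_k$ and aim to prove $f_1(a,c_2,\dots,c_k)=f_1(b,c_2,\dots,c_k)$. The idea is to build, for each $p\in(0,1)$, a market in which all pure actions are tied in expectation, so that strong universality forces every profile to be an equilibrium. Take a probability space with a ``main'' event of probability $p$ and a ``sacrifice'' event of probability $1-p$, and define pure actions $A_a,A_b,C_2,\dots,C_k$ whose values on the main event are $a,b,c_2,\dots,c_k$; on the sacrifice event assign each action the value making its expectation equal a common $\mu$ (for instance $A_a$ gets $(\mu-pa)/(1-p)$, and similarly for the others). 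All actions then share the maximal expectation $\mu$, so every pure profile satisfies the optimality condition $E[\sum_i\sigma_i]=k\mu=k\max_l E[X_l]$; by strong universality every such profile—crucially the non-diagonal ones—is an equilibrium.

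Now apply the equilibrium condition at the two optimal profiles in which players $2,\dots,k$ play $C_2,\dots,C_k$ while player $1$ plays $A_a$ or $A_b$. Writing $\bar c=(c_2,\dots,c_k)$ and letting $L_a,L_b,\bar\ell$ denote the sacrifice values, the two-event expectations turn the condition ``player $1$ cannot gain by switching from $A_a$ to $A_b$'' into
\begin{equation*}
p\bigl(f_1(a,\bar c)-f_1(b,\bar c)\bigr)\ \ge\ (1-p)\bigl(f_1(L_b,\bar\ell)-f_1(L_a,\bar\ell)\bigr)\ \ge\ -(1-p),
\end{equation*}
and the symmetric inequality holds with $a,b$ exchanged. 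Because $f$ takes values in $[0,1]^k$, the sacrifice terms are bounded, so the two inequalities together give $|f_1(a,\bar c)-f_1(b,\bar c)|\le (1-p)/p$; letting $p\to 1$ yields $f_1(a,\bar c)=f_1(b,\bar c)$. Permuting the roles of the players gives the same for every $f_i$, so each $f_i$ is independent of $r_i$, completing the argument.

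The crux—and the reason mere non-constancy is insufficient (cf.\ the preceding remark)—is the combination of tied markets with sacrificial large values: this is exactly what realizes an arbitrary target vector $(a,c_2,\dots,c_k)$ as the main-event outcome of a genuinely optimal, \emph{non-diagonal} profile, so that strong universality (rather than the weaker diagonal guarantee available only under a strict best action) supplies the needed inequalities. The single delicate point is that the sacrifice values depend on $p$ and diverge as $p\to1$; this is harmless precisely because they enter only through the bounded quantity $f_1(\,\cdot\,,\bar\ell)$, which the factor $1-p$ annihilates in the limit.
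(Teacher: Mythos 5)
Your proof is correct. It establishes the same intermediate fact as the paper---that each coordinate function $f_i$ is independent of its own argument $r_i$---but by a genuinely different construction. The paper splits the argument into two monotonicity claims: it first shows $f_i(\cdot,r_{-i})$ is non-decreasing by having all players play i.i.d.\ copies of a full-support distribution on $\{r_1,\dots,r_k\}$ and handing player $i$ a correlated deviation that lowers one outcome (this lowers the deviation's expectation, so no compensation is needed); it then shows $f_i(\cdot,r_{-i})$ is non-increasing by the same device augmented with a sacrifice branch (replacing $\bar z$ by $\underline z$ on an event of probability $1-p$) to offset the raised outcome, choosing $p$ close to $1$ to reach a contradiction. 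Your construction collapses this into a single symmetric step: by tying all pure actions to a common expectation $\mu$ through the sacrifice branch, every pure profile---crucially the non-diagonal ones $(A_a,\bar C)$ and $(A_b,\bar C)$---satisfies the optimality condition, and the two resulting equilibrium inequalities sandwich $|f_1(a,\bar c)-f_1(b,\bar c)|\le (1-p)/p$, which vanishes as $p\to 1$. What the paper's route buys is an explicit profitable deviation at a fixed $p$ (a direct contradiction, and the non-decreasing half needs no sacrifice event at all); what yours buys is the elimination of the case split and of the i.i.d./correlated-deviation bookkeeping, at the price of a limiting argument over a family of markets. Both arguments pull on the same lever---strong universality applied to markets whose maximal expectation is attained by several distinct actions, which is precisely where the hypothesis exceeds mere universality---and your handling of the one delicate point (the diverging sacrifice values enter only through the bounded quantity $f_1(\cdot,\bar\ell)$ multiplied by $1-p$) is exactly right.
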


\begin{proof}
Let $f$ be a strongly universal bonus plan. Clearly, Theorem \ref{Theorem - Imopssibility theorem} implies that the result holds for the case of $k=2$. Fix $k\geq 3$. We prove the theorem by showing that for every player $i$ and for every vector of outcomes $r\in \R^k$, the $i^{th}$ coordinate $f_i(r)$ of the bonus plan is non-decreasing  and non-increasing in $r_i$.

Assume to the contrary that there exists a player $i$, a vector of outcomes $r \in \R^k$, and $w_i \in \R$ such that $f_i(w_i,r_{-i})> f_i(r_i,r_{-i})$ where $w_i < r_i$. Define the random variable $X$ such that $\Pr(X=x)>0$ iff $x =r_j$ when $1\leq j \leq k$. Assume that $\Pr(X=r_i)>\Pr(X=r_j)$ for every $j\neq i$. In addition, define a set of i.i.d. random variables $X_j \sim X$ where $1\leq j\leq k$. Define the vector-valued random variable $(W,X_{-i})$ by 
$$
\Pr((W,X_{-i}) =x)= \Pr((X_i,X_{-i}) = x) \ \ \forall  x\neq r,
$$
and
$$
\Pr((W,X_{-i}) =(w_i,r_{-i})) = \Pr((X_i,X_{-i}) = r).
$$
Clearly, $(W,X_{-i})$ and $W$ are well-defined. A direct computation shows that $E[W] < E[X]$. However, the vector $(X_i,X_{-i})$ is not an equilibrium as player $i$ can deviate to $W$ and increase his payoff since 
$f_i(w_i,r_{-i})> f_i(r_i,r_{-i})$. Hence $f_i(\cdot,r_{-i})$ is non-decreasing for every $i$ and every $r_{-i}$.

Now assume to the contrary that $f_i(r_i,r_{-i})$ is strictly increasing in $r_i$. That is, there exists a player $i$, a vector of outcomes $r \in \R^k$, and $y_i \in \R$ such that $f_i(y_i,r_{-i})> f_i(r_i,r_{-i})$ where $y_i > r_i$.

Let $\bar{z},\underline{z}\in \R$ be two real number such that $\bar{z}>r_j >\underline{z}$ for every $1\leq j \leq k$ and let $p$ be a number in $(0,1)$. Define the random variable $Y$ such that, with probability $p$, it follows that $\Pr(Y=y)>0$ iff $y =r_j$ when $1\leq j \leq k$ (assume that $\Pr(Y=r_i)>\Pr(Y=r_j)$ for every $j\neq i$), and with probability $1-p$, the random variable $Y$ equals $\bar{z}$. In addition, define a set of i.i.d. random variables $Y_j \sim Y$ where $1\leq j\leq k$. Define the vector-valued random variable $(Z,Y_{-i})$ by 
$$
\Pr((Z,Y_{-i}) =y)= \Pr((Y_i,Y_{-i}) = y) \ \ \forall  y\neq r, y_j\neq \bar{z} \ \forall j,
$$
$$
\Pr((Z,Y_{-i}) =(y_i,r_{-i})) = \Pr((Y_i,Y_{-i}) = r),
$$
and if there exists a coordinate $j$ in $y \in \R^k$ such that $y_j = \bar{z}$, then
$$
\Pr((Z,Y_{-i})=(\underline{z},y_{-i}))= \Pr((Y_i,Y_{-i}) = y).
$$

Clearly, $(Z,Y_{-i})$ and $Z$ are well-defined. Note that
\begin{eqnarray}
E[f_i(Z,Y_{-i})] & = & E\left[f_i(Y_i,Y_{-i})\mathbf{1}_{\{Y_{-i}\neq r, Y_j\neq \bar{z} \ \forall j\}}\right] +f_i(y_i,r_{-i}) \Pr((Y_i,Y_{-i}) = r) \nonumber \\
& + & \sum_{\substack{ y\in \R^k : \\ \exists j, y_j=\bar{z}}}f_i(\underline{z},y_{-i}) \Pr((Y_i,Y_{-i}) = y) \nonumber \\
& > & E\left[f_i(Y_i,Y_{-i})\mathbf{1}_{\{Y_j\neq \bar{z} \ \forall j\}}\right] + \sum_{\substack{ y\in \R^k : \\ \exists j, y_j=\bar{z}}}f_i(\underline{z},y_{-i}) \Pr((Y_i,Y_{-i}) = y) \label{Ineq 1 - General theorem for k>3}  \\
& = & E\left[f_i(Y_i,Y_{-i})\right] + \sum_{\substack{ y\in \R^k : \\ \exists j, y_j=\bar{z}}}(f_i(\underline{z},y_{-i})-f_i(y)) \Pr((Y_i,Y_{-i}) = y), \label{Ineq 2 - General theorem for k>3} 
\end{eqnarray}
when Ineq. \eqref{Ineq 1 - General theorem for k>3} follows from the assumption that $f_i(y_i,r_{-i})> f_i(r_i,r_{-i})$. The sum in Eq. \eqref{Ineq 2 - General theorem for k>3} is bounded, therefore we can choose $p$ sufficiently close to $1$ (but still smaller than $1$), such that for every $\bar{z}$, the inequality $E[f_i(Z,Y_{-i})] > E\left[f_i(Y_i,Y_{-i})\right] $ holds. Taking a sufficiently large $\bar{z}$  and a sufficiently low $\underline{z}$, guarantees that $E[Y]>E[Z]$. 

In conclusion, the vector $(Y_i,Y_{-i})$ is not an equilibrium as player $i$ can deviate to $Z$ and increase his payoff. Contradiction. Hence $f_i(\cdot,r_{-i})$ is non-increasing for every $i$ and every $r_{-i}$. The combination of the two results, proves that $f_i$ is independent of the $i^{th}$ coordinate. This implies that the expected payoff of every player $i$ is independent of his actions and every profile of actions is an equilibrium.
\hfill
\end{proof}

\section{Conclusions and addtional comments} \label{Section - Conclusions}
The results presented in this paper have two completing aspects: an applicative one and a theoretical one. On the one hand, we give a specific description of a bonus plan that guarantees that agents are motivated to act according to the DM's wishes. On the other hand, we show that an optimal bonus plan simply does not exist, unless it is useless as a bonus plan (constant, meaning fixed payoffs).

This work is based on the assumption that the DM tries to maximize her expected payoff. Although, this assumption is common in the literature, one can still follow the line of \cite{Holmstrom1991} and assume that the DM tries to optimize a few elements simultaneously. For example, maximizing the expected revenue while minimizing the risk. In general, one can assume that the DM has a certain preference relation over the set of actions and try to find a bonus plan which implements this preference in equilibria. These problems are left for future research. 

Another element left for future work is the generalization given in Theorem \ref{Theorem - Impossibility for k>2}. This generalization could be done in various ways. E.g., we can assume that $f_i(\cdot,r_{-i})$ always has some dependency on the action of player $i$ for every $r_{-i}$, and then explore whether an optimal bonus plan is possible under such conditions.


\bibliographystyle{aer}
\bibliography{../../MyCollection}


\end{document}